\documentclass[a4paper,10pt]{article}
\usepackage[T1]{fontenc}
\usepackage{etoolbox}
\usepackage{amsmath,amsthm,amssymb,mathtools,nicefrac}
\usepackage{stmaryrd,enumitem}
\usepackage{scalerel,cmll,lmodern}
\usepackage{graphicx,epsfig,tensor}
\usepackage{multicol,accents,float,threeparttable}
\usepackage{url}
\allowdisplaybreaks

\usepackage{geometry}\geometry{
 a4paper,
 total={210mm,297mm},
 left=20mm,
 right=20mm,
 top=20mm,
 bottom=20mm,
 }

\usepackage{footnote}
\makesavenoteenv{tabular}

\setcounter{MaxMatrixCols}{20}

\usepackage[ruled]{algorithm2e}
\usepackage{adjustbox}

\DeclareFontFamily{U}{pxsyb}{} \DeclareFontShape{U}{pxsyb}{m}{n}%
{<-> s * [1] pxsyb}{}
\DeclareMathAlphabet{\me}{U}{pxsyb}{m}{n}

\DeclareSymbolFont{OilerScript}{U}{eus}{m}{n}
\SetSymbolFont{OilerScript}{bold}{U}{eus}{b}{n}
\DeclareSymbolFontAlphabet\mathgra{OilerScript}

\newcommand{\mN}{\mathbb{N}}

\newcommand{\mR}{\mathbb{R}}
\newcommand{\mS}{\mathbb{S}}
\newcommand{\mT}{\mathbb{T}}

\newcommand{\fC}{\fset{C}}

\newcommand{\kE}{\mathcal{E}}

\newcommand{\Ff}{\mathsf{F}}

\newcommand{\Of}{\mathsf{O}}

\newcommand{\ep}{\epsilon}

\newcommand{\thet}{\theta}

\newcommand{\ro}{\varrho}

\newcommand{\la}{\lambda}

\newcommand{\fii}{\phi}

\newcommand{\Fi}{\Phi}
\newcommand{\Ga}{\Gamma}

\newcommand{\Sig}{\Sigma}
\newcommand{\Yp}{\Upsilon}

\newcommand{\ac}{\acute}

\newcommand{\dhat}{\widehat}
\newcommand{\dtl}{\widetilde}

\makeatletter
\newcommand{\raisemath}[1]{\mathpalette{\raisemth{#1}}}
\newcommand{\raisemth}[3]{\raisebox{#1}{$#2#3$}}
\makeatother

\newcommand{\gd}{\!\raisemath{-3.5pt}{\scaleobj{2.4}{\cdot}}\!}

\newcommand{\norm}[1]{\|#1\|}

\newcommand{\dt}{\cdot}

\newcommand{\mi}{\wedge}
\newcommand{\ma}{\vee}

\newcommand{\fset}[1]{{\me{#1}}}

\DeclareMathAlphabet\mathbf{OT1}{cmr}{bx}{n}

\newcommand{\limsp}{\operatornamewithlimits{\mathop{limsup}}}

\DeclareMathOperator{\Sy}{\bm{\mathsf{Sy}}}

\DeclareMathOperator{\rank}{rank}

\newcommand{\Row}{\operatornamewithlimits{\mathop{\bm{\mathsf{Row}}}}}

\newcommand{\Col}{\operatornamewithlimits{\mathop{\bm{\mathsf{Col}}}}}

\newcommand{\rlm}{\mathord{\downarrow}}

\newcommand{\dif}{\mathsf{d}}

\makeatletter

\usepackage{bm,natbib}
\usepackage[hidelinks]{hyperref}

\newtheorem{thm}{Theorem}
\newtheorem{lem}{Lemma}

\theoremstyle{remark}

\theoremstyle{plain}

\theoremstyle{remark}

\theoremstyle{definition}
\newtheorem{defi}{Definition}

\theoremstyle{definition}
\newtheorem{as}{Assumption}

\theoremstyle{plain}

\newtheorem{coro}{Corollary}

\begin{document}

\title{\textbf{Dissipative Stability Conditions for Linear Coupled Differential-Difference Systems via a Dynamical Constraints Approach}}

\author{Qian Feng\thanks{qfen204@aucklanduni.ac.nz, Department of Electrical and Computer Engineering, The University of Auckland, Auckland 1010,  New Zealand}}
\date{}

\maketitle

\begin{abstract}
\noindent In this short note, we derive dissipative conditions with slack variables for a linear coupled differential-difference (CDDS) via constructing a Krasovskii functional. The approach can be interpreted as a generalization of the Finsler Lemma approach for standard LTI systems proposed previously in \cite{de2001stability}.  We also show that the proposed slack variables scheme is equivalent to the approach based on directly substituting the system trajectory $\dot{\bm{x}}(t)$, similar to the case of LTI system.

\noindent \textbf{Keywords} : Krasovskii functionals, Coupled Differential-Difference System, Finsler Lemma, Projection Lemma.

\end{abstract}

\section{Introduction}
Many approaches for the stability analysis of time delay system \cite{fridman2014introduction} have been proposed over the recent decades based on solving semidefinite programs derived via constructing Krasovskii functional. Most of such results \cite{Briat2014} are obtained by directly substituting $\dot{\bm{x}}(t)$ with the system trajectory expression during the process of constructions. In this note however, we will exploit the idea of Finsler Lemma (Projection Lemma) approach, originated from \cite{de2001stability} concerning LTI systems, to derive dissipative conditions for a linear coupled differential-difference system \cite{Gu2009798} with a distributed delay.  The dissipative conditions derived via Finsler Lemma in this note, which is also based on constructing a Krasovskii functional,  can be considered as a generalization of the exiting results for LTI systems in \cite{de2001stability}. The application of Finsler Lemma with Projection Lemma avoids a direct substitution of $\dot{\bm{x}}(t)$, which results in the introduction of extra-matrix terms in the resulting dissipative conditions. Finally, we show that the dissipative conditions with slack variables are equivalent to the conditions derived via the approach of directly substituting $\dot{\bm{x}}(t)$. For the advantage of having slack variables in dissipative conditions, see the examples in explained in \cite{de2001stability} which can also be valid for the corresponding delay problems. 

\section*{Notation}
We define \( \mathbb{T}: =\{ x \in \mathbb{R}: x \geq 0 \}  \) and  \( \mathbb{S}^{n}:= \{ X \in \mathbb{R}^{n \times n}: X = X^\top \}  \) and \( \mathbb{R}_{[n]}^{n \times n}: = \{ X \in \mathbb{R}^{n \times n}: \rank(X) = n\}  \).  The notations \( \norm{\mathbf{x}}_{q} = \left( \sum_{i=1}^{n}|x_i|^{q} \right)^\frac{1}{q} \) and \( \norm{f(\cdot)}_{p} = \left( \int_{ \mathbb{R}} |f(t)|^p \dif t \right)^{\frac{1}{p}} \) and 
\( \norm{\bm{f}(\cdot)}_{p} = \left( \int_{ \mathbb{R}} \|\bm{f}(t)\|_{2}^p \dif t \right)^{\frac{1}{p}} \) are the norms associated with \( \mathbb{R}^{n}\) and 
Lebesgue integrable functions space $ \fset{L}_{p}(\mathbb{R} ; \mathbb{R})$ and $ \fset{L}_{p}(\mathbb{R} ; \mathbb{R}^n)$, respectively. In addition, We use \( \fset{C}_{\circ}^{\bullet}(\mathcal{X};\mathbb{R}^n) \) to denote the space of right piecewise continuous functions. Let \( \Sy(X) := X + X^\top \) to be the sum of a matrix with its transpose. \( \Col_{i=1}^{n} x_i := \left[ \Row_{i=1}^{n} x^\top_i \right]^\top = \big[ x_1^\top \cdots  x_i^\top  \cdots  x_n^\top \big]^\top \) is defined a column vector containing a sequence of objects. The symbol \( * \) is applied to denote  \( [*] Y X = X^\top YX  \) or \( X^\top Y[*] = X^\top YX \). Furthermore, let $x \ma y = \max(x,y)$ and $x \mi y  = \min(x,y)$.  $\Of_{n \times n}$ is used to denote a $n\times n$ zero matrix with the abbreviation $\Of_n$, whereas $\bm{0}_n$ denotes a $n \times 1$ column vector. The diagonal sum of two matrices and \( n \) matrices are defined as \( X \oplus Y = \mathsf{Diag} (X, Y),\; \bigoplus_{i=1}^n X_i = \mathsf{Diag}_{i=1}^{n} (X_i) \), respectively. Furthermore, \( \otimes \) stands for the Kronecker product. Finally, we assume the order of matrix operations as \emph{matrix (scalars) multiplications} $> \otimes > \oplus > $ \emph{matrix (scalar) additions}.

\section{Problem formulation} 

Consider a linear coupled differential-difference system (CDDS) 
\begin{align}
& \dot{\bm{x}}(t) = A_1\bm{x}(t) +  A_2 \bm{y}(t-r) + \textstyle \int_{-r}^{0} \ac{A}_3 (\tau) \bm{y}(t+\tau) \dif \tau +  D_1 \bm{w}(t), \notag
\\
& \bm{y}(t) = A_4\bm{x}(t) + A_5\bm{y}(t-r), \label{openloop}
\\
& \bm{z}(t) = C_1\bm{x}(t) + C_2 \bm{y}(t-r) + \textstyle \int_{-r}^{0} \ac{C}_3 (\tau) \bm{y}(t+\tau) \dif \tau + D_2 \bm{w}(t), \notag
\\[1mm]
& \left( \bm{x}(0), \bm{y}(0+\dt) \right) = \left( \bm{\xi}, \bm{\phi}(\dt) \right) \in \mathbb{R}^{n} \times \fC_{\circ}^{\bullet}([-r,0) \fatsemi \mathbb{R}^{\nu}), \notag
\end{align} 
where \( \bm{x}(t) \in \mathbb{R}^{n} \) and $\bm{y}(t) \in \mathbb{R}^{\nu}$ are the solution of \eqref{openloop},
 \( \bm{w}(\cdot) \in \fset{L}_{\mathit{2}} (\mathbb{T} \fatsemi \mathbb{R}^q) \) represents disturbance, 
\( \bm{z}(t) \in \mathbb{R}^m \) is the regulated output. Furthermore, $\bm{\xi} \in \mR^{n}$ and  
\( \bm{\phi}(\cdot) \in \fC_{\circ}^{\bullet}([-r,0) \fatsemi \mathbb{R}^n) \) are the initial conditions with a known delay value $r>0$. The distributed delay term is $F(\tau):=\bm{f}(\tau) \otimes I_{\nu}$ with $\bm{f}(\tau) = \Col_{i=1}^{d}f_i(\tau) \in \fC^{1}([-r,0] \fatsemi \mathbb{R}^d)$. Moreover, the state space matrices are 
$A_1 \in \mathbb{R}^{n \times n}$ with $ A_2; A_3(\tau); A_4 \in \mathbb{R}^{n \times \nu}$, and $C_1 \in \mathbb{R}^{m \times n}$ with $ C_2; \ac{C}_3(\tau) \in \mathbb{R}^{m \times \nu}$, and  $D_1 \in \mathbb{R}^{n \times q}$ and  $D_2 \in \mathbb{R}^{m\times q} $  with the indexes $n;\nu \in \mN$ and $m;q\in \mathbb{N}_{0} $.  It is also assumed $\norm{A_5} <1$ which ensures the input to state stability of $\bm{y}(t) = A_4\bm{x}(t) + A_5\bm{y}(t-r)$ \cite{Gu2009798}. Finally, $A_3(\tau)$ and $C_3(\tau)$ satisfy the following assumption. 

\begin{as}\label{as1}
 There exist $ \Col_{i=1}^{d} f_i(\tau) = \bm{f}(\cdot) \in \fset{C}^{1} (\mR \fatsemi \mathbb{R}^d)$ with 
$d \in \mathbb{N}_{0} $, and 
\( A_3 \in \mathbb{R}^{n \times \ro}\),  \( C_3 \in \mathbb{R}^{m \times \ro} \) with $\ro  = \nu d$ such that 
$\forall \tau \in [-r,0]$, \( \mathbb{R}^{n \times \nu} \ni \ac{A}_3(\tau) = A_3F(\tau)$ and 
\( \mathbb{R}^{m \times \nu} \ni \ac{C}_3(\tau) = C_3F(\tau) \). 
	In addition, $\{ f_i(\dt) \}_{i=1}^{d} $ are linearly independent and \( \bm{f}(\dt) \) satisfies \begin{equation}\label{pera}
\exists ! M \in \mR^{d \times d}: \dfrac{\dif \bm{f}(\tau)}{\dif \tau} = M \bm{f}(\tau). 
\end{equation}  
\end{as}

\subsection{Preliminaries}

To prove the main results in this note, the following Lemmas and Definitions are required.
\begin{lem} \label{Lemma 1}
    For all \( P \in \mathbb{R}^{p \times q} \) and \( Q \in \mathbb{R}^{n \times m} \), we have
    \begin{gather}
    (P \otimes I_n)( I_q \otimes Q )  = P \otimes Q = \left( I_p \otimes Q\right)(P \otimes I_m).  \label{commutative}
    \end{gather}
    Moreover, we have $\forall X \in \mathbb{R}^{n \times m}$, \;$\forall Y \in \mathbb{R}^{m \times p}$, \; $\forall Z \in \mR^{q\times r}$,
    \begin{gather}
    (XY) \otimes Z =  (XY) \otimes (Z I_r) =( X \otimes Z ) ( Y \otimes I_r ). \label{function}
    \end{gather}
\end{lem}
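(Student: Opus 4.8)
The plan is to derive both identities from the single \emph{mixed-product property} of the Kronecker product, namely that $(A \otimes B)(C \otimes D) = (AC) \otimes (BD)$ whenever the ordinary products $AC$ and $BD$ are defined. First I would establish this property by writing $A \otimes B$ in block form, its $(i,j)$ block being $A_{ij} B$, and multiplying block-by-block: the $(i,j)$ block of $(A \otimes B)(C \otimes D)$ equals $\sum_{k} (A_{ik} B)(C_{kj} D) = \big( \sum_{k} A_{ik} C_{kj} \big) BD = (AC)_{ij}\, BD$, which is precisely the $(i,j)$ block of $(AC) \otimes (BD)$. (Alternatively, one checks this entrywise directly from the definition of $\otimes$.)

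Given this, \eqref{commutative} is immediate. Taking $A = P$, $B = I_n$, $C = I_q$, $D = Q$ yields $(P \otimes I_n)(I_q \otimes Q) = (P I_q) \otimes (I_n Q) = P \otimes Q$, and taking $A = I_p$, $B = Q$, $C = P$, $D = I_m$ yields $(I_p \otimes Q)(P \otimes I_m) = (I_p P) \otimes (Q I_m) = P \otimes Q$. For \eqref{function}, one first notes the trivial equality $Z = Z I_r$, and then applies the mixed-product property with $A = X$, $B = Z$, $C = Y$, $D = I_r$ to obtain $(X \otimes Z)(Y \otimes I_r) = (XY) \otimes (Z I_r) = (XY) \otimes Z$.

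Since the entire argument reduces to the mixed-product property, there is no real obstacle beyond stating it cleanly; the only point requiring a little care is bookkeeping the block dimensions so that every indicated product is conformable (for instance, $P I_q$ requires $I_q$ to be $q \times q$, matching the column count of $P$, and similarly for the others), but these are exactly the dimension hypotheses imposed in the statement. If one prefers to keep the note self-contained and avoid invoking the mixed-product property as a black box, the same block-multiplication computation can simply be carried out directly in each of the two special cases, which is the route I would take.
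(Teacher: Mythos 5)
Your argument is correct and follows essentially the same route as the paper, which likewise derives both identities from the mixed-product property $(A \otimes B)(C \otimes D) = (AC)\otimes(BD)$; you simply spell out the substitutions and add a block-multiplication justification of that property, which the paper takes as known.
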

\begin{proof}
\eqref{commutative} and \eqref{function} are derived via the property of the Kronecker product: $(A \otimes B)(C \otimes D) = (AC)\otimes(BD)$. 
\end{proof}

The following Lemma \ref{Lemma 2} is a particular case of the Theorem 3 in \cite{Gu2009798}.
\begin{lem} \label{Lemma 2}
Given $r >0$, suppose a differential-difference system 
\begin{equation} \label{DDS}
\dot{\bm{x}}(t) = \bm{f}(\bm{x}(t),\bm{y}(t+\cdot)), \quad \bm{y}(t) = \bm{g}(\bm{x}(t),\bm{y}(t+\dt)), \;\; \bm{f}(\bm{0}_{n},\bm{0}_{\nu}(\dt)) = \bm{0}_n, \;\; \bm{g}(\bm{0}_{n}, \bm{0}_{\nu}(\dt)) = \bm{0}_{\nu}(\dt)    
\end{equation}
 satisfying the prerequisites in the Theorem 3 of \cite{Gu2009798},  where $\bm{y}(t+\dt) \in \fC_{\circ}^{\bullet}([-r,0) \fatsemi \mathbb{R}^{\nu})$ and $\bm{y}(t) = \bm{g}(\bm{x}(t),\bm{y}(t+\dt))$ is uniformly
    input to state stable. Then the origin of \eqref{DDS} is globally uniformly asymptotically stable, if there exists a differentiable functional $ v(\gd_1, \gd_2(\cdot) ) : \mathbb{R}^{n} \times \fC_{\circ}^{\bullet}([-r,0) \fatsemi \mathbb{R}^{\nu}) \to \mathbb{T}$ with $v(\bm{0}_{n}, \bm{0}_{\nu}(\dt)) = 0$, such that the following conditions are satisfied:
    \begin{gather}
    \begin{multlined} 
 \exists \epsilon; \ep_2 >0, \; \forall \bm{\xi} \in \mR^{n},  \; \forall \bm{\phi}(\cdot) \in \fC_{\circ}^{\bullet}([-r,0) \fatsemi \mathbb{R}^{\nu}),\;
    \epsilon_1 \norm{\bm{\xi}}_{2}^2 \leq v(\bm{\xi}, \bm{\phi}(\cdot))
    \leq \ep_2 \left(  \norm{\bm{\xi}}_2 \ma \norm{\bm{\phi}(\dt)}_{\mathit{\infty}} \right)^2 
    \end{multlined} 
    \label{posit} 
    \\[1mm]
    \;\; \exists \epsilon_3 > 0,\; \forall \bm{\xi} \in \mR^{n},
    \; \; \forall \bm{\phi}(\cdot) \in \fC_{\circ}^{\bullet}([-r,0) \fatsemi \mathbb{R}^{\nu}),\; \dot{v}( \bm{\xi}, \bm{\phi}(\cdot)) \leq - \epsilon_3 \norm{\bm{\xi}}_{2}^2 \label{neg}
    \end{gather}
    where 
\begin{equation}\label{difer}
\dot{v}(\bm{\xi}, \bm{\phi}(\cdot)):= \left.\frac{\dif^{\rlm}}{\dif t} v(\bm{x}(t),\bm{y}(t+\dt))\right|_{t = \tau, \bm{x}(\tau) = \bm{\xi}, \bm{y}(\tau+\dt) = \bm{\fii}(\dt)}, \quad \frac{\dif^{\rlm}}{\dif x} f(x) = \limsp_{\eta\rlm 0} \dfrac{f(x+\eta) - f(x)}{\eta},
\end{equation}
\end{lem}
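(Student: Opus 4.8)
The plan is to obtain Lemma~\ref{Lemma 2} as a direct specialization of Theorem~3 of \cite{Gu2009798}, which already establishes global uniform asymptotic stability of the origin of a coupled differential--difference system of the form \eqref{DDS} under the existence of a functional satisfying a two-sided class-$\mathcal{K}_{\infty}$ bound together with a negative-definite estimate on its derivative along solutions. No new asymptotic analysis is therefore needed; the task is to match the present hypotheses --- in which the comparison functions have been fixed to be quadratic --- with the abstract ones of that theorem.

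First I would identify the state of \eqref{DDS} with the pair $(\bm{\xi},\bm{\phi}(\cdot)) \in \mathbb{R}^{n}\times\fC_{\circ}^{\bullet}([-r,0)\fatsemi\mathbb{R}^{\nu})$ and its size with $\norm{\bm{\xi}}_{2}\ma\norm{\bm{\phi}(\dt)}_{\infty}$, and check that this is precisely the metric used for such systems in \cite{Gu2009798}. Next I would verify that $s\mapsto\epsilon_{1}s^{2}$, $s\mapsto\epsilon_{2}s^{2}$ and $s\mapsto\epsilon_{3}s^{2}$ all belong to $\mathcal{K}_{\infty}$, so that \eqref{posit} is an instance of the required class-$\mathcal{K}_{\infty}$ sandwich bound and \eqref{neg} is an instance of the negativity requirement $\dot v\le-\alpha_{3}(\norm{\bm{\xi}}_{2})$. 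Since $v$ is differentiable with $v(\bm{0}_{n},\bm{0}_{\nu}(\dt))=0$, I would also confirm that the one-sided derivative \eqref{difer} along a solution agrees with (and in any case dominates) the upper-right Dini derivative used in the general statement. Finally, the uniform input-to-state stability of the difference part $\bm{y}(t)=\bm{g}(\bm{x}(t),\bm{y}(t+\dt))$ is assumed here verbatim, matching the corresponding prerequisite of Theorem~3; in the application \eqref{openloop} it is supplied by $\norm{A_{5}}<1$. With these identifications in place, Theorem~3 of \cite{Gu2009798} applies and yields the claim.

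I expect the one genuinely delicate point to be conceptual rather than computational: condition \eqref{neg} bounds $\dot v$ from above by a negative quantity depending only on the present value $\bm{x}(t)$ and not on the whole history $\bm{y}(t+\dt)$, yet this still yields \emph{asymptotic} --- not merely Lyapunov --- stability. The mechanism is exactly the input-to-state stability of the difference operator, which transports the decay of $\bm{x}$ to $\bm{y}$ through its geometrically fading memory; this implication is precisely what is carried out inside the proof of Theorem~3 of \cite{Gu2009798}, by combining a Krasovskii-type estimate for the differential variable with a fading-memory estimate for the difference variable. Consequently, the residual work here is only to confirm that our standing regularity assumptions --- differentiability of $v$, right piecewise continuity of $\bm{\phi}$, and $\norm{A_{5}}<1$ --- meet the ``prerequisites'' invoked in the hypotheses of that theorem.
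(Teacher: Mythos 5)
Your proposal is correct and follows exactly the route the paper itself takes: the paper gives no separate proof, stating only that Lemma~\ref{Lemma 2} is a particular case of Theorem~3 of \cite{Gu2009798}, and your argument simply spells out the specialization (quadratic bounds as instances of the class-$\mathcal{K}_{\infty}$ comparison functions, the role of the ISS difference operator, and the derivative convention). Your write-up is in fact more detailed than the paper's, but it is the same reduction.
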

with $\dot{\bm{x}}(t)$ and $\bm{y}(t+\dt)$ satisfying \eqref{DDS}. 

\begin{defi}[Dissipativity]\label{def1}
Given $r >0$, a delay system 
\begin{equation} \label{dissi}
\dot{\bm{x}}(t) = \bm{f}(\bm{x}(t), \bm{y}(t+\cdot), \bm{w}(t)), \quad \bm{y}(t) = \bm{g}(\bm{x}(t), \bm{y}(t+\dt)), \quad \bm{z}(t) = \bm{h}\big(\bm{x}(t), \bm{y}(t+\cdot), \bm{w}(t)\big),
 \end{equation} is dissipative with respect to the supply rate function \( s(\bm{z}(t),\bm{w}(t)) \),  if there exists a differentiable functional \( v(\gd_1, \gd_2(\cdot) ): \mathbb{R}^{n} \times \fC_{\circ}^{\bullet}([-r,0) \fatsemi \mathbb{R}^{\nu}) \to \mathbb{R} \) such that 
    \begin{equation}\label{diss}
    \forall t \in \mathbb{T}, \; \dot{v}( \bm{x}(t), \bm{y}(t+\cdot)) - s(\bm{z}(t),\bm{w}(t)) \leq 0,
    \end{equation}
with $\dot{\bm{x}}(t)$, $\bm{y}(t+\dt)$ and $\bm{z}(t)$ satisfying \eqref{dissi}. Under the assumption that \( v(\gd_1, \gd_2(\cdot) ) : \mathbb{R}^{n} \times \fC_{\circ}^{\bullet}([-r,0) \fatsemi \mathbb{R}^{\nu}) \to \mathbb{R} \) is differentiable, \eqref{diss} is equivalent to the original definition of dissipativity. (See \cite{Briat2014} for the definition of dissipativity without delays)

\end{defi}  
To conduct dissipative analysis for \eqref{openloop}, a quadratic supply function
\begin{equation}\label{supply}
s(\bm{z}(t),\bm{w}(t))  = 
\begin{bmatrix}
\bm{z}(t) \\ \bm{w}(t)
\end{bmatrix}^\top
\mathbf{J}
\begin{bmatrix}
\bm{z}(t)
\\
\bm{w}(t)
\end{bmatrix} \quad \text{with} \quad 
\mathbf{J} = \begin{bmatrix} 
J_1 & J_2 \\ * & J_3
\end{bmatrix} \in \mathbb{S}^{m + q},\; \; J_1 \preceq 0
\end{equation}
is considered in this note, which is taken from \cite{scherer1997multiobjective}.

To analyze the stability of the origin of \eqref{openloop}, we apply the Krasovskii functional
\begin{equation} \label{KL}
 v(\bm{x}(t), \bm{y}(t+\cdot)) := \begin{bmatrix}
*
\end{bmatrix} P
\begin{bmatrix}
\bm{x}(t)
\\
\int_{-r}^{0} F(\tau)\bm{y}(t+\tau) \dif \tau
\end{bmatrix} + \int_{-r}^{0}\bm{y}^\top(t+\tau)  
 \Big[ S + (\tau+r)U \Big]  \bm{y}(t+\tau)  \dif \tau
\end{equation}
to be constructed, where $P \in \mathbb{S}^{n + \ro}$ and $S;U \in \mathbb{S}^{\nu}$ and $F(\tau):= \bm{f}(\tau) \otimes I_{\nu}$ with $\bm{f}(\tau)$ which is given and defined in \eqref{openloop} satisfying Assumption \ref{as1}.
\section{Main results}

\subsection{Dissipative stability conditions without slack variables}
In this subsection, we first present dissipative conditions constructed by directly substituting the expression of $\dot{\bm{x}}(t)$ during our derivation.  

\begin{thm}\label{Theorem 1}
Given \( J_1 \prec 0\) in \eqref{supply}, the linear CDDS \eqref{openloop} is globally uniformly asymptotically stable at its origin and dissipative with respect to \eqref{supply},  if there exist  $ P \in \mathbb{S}^{n + \ro}$ and $S;U \in \mathbb{S}^{\nu}$ such that the following conditions hold,
\begin{gather}\label{positiv}
P + \Big[ \Of_n  \oplus  \left( \Ff  \otimes S \right) \Big] \succ 0,\quad S \succ 0, \;\; U \succ 0
\\
\begin{bmatrix} J_1^{-1} & \Sig \\ * & \bm{\Phi}   \end{bmatrix} \prec 0, \label{meitrix}
\end{gather}
where $\Ga := \begin{bmatrix} \Of_{\nu \times  q} &  A_4 & A_5 & \Of_{\nu \times \ro } \end{bmatrix}$ and $\Sigma := \begin{bmatrix}
D_2 & C_1 &  C_2 & C_3	
\end{bmatrix}$
and	
\begin{equation}\label{fi}
\bm{\Fi} :=  \Sy \left( H P
\Theta \right) +  \Ga^\top  \left( S + rU \right) \Ga - \Big( J_3 \oplus \Of_{n} \oplus S 
\oplus \left( \Ff  \otimes U \right)   \Big) - \Sy\left( \begin{bmatrix}  \Sig^\top J_2 & \Of_{(n + \nu + \ro + q) \times (n + \nu + \ro) } \end{bmatrix} \right).
\end{equation}
with 
\begin{equation}\label{F}
H = \left[ \begin{smallmatrix}
\Of_{ q \times n} & \Of_{q \times \ro} \\ I_n & \Of_{n \times \ro}  \\  \Of_{\nu \times n} & \Of_{\nu \times \ro} \\ \Of_{\ro \times n} & I_{\ro} \end{smallmatrix} \right], \;\; \Theta =  \begin{bmatrix}
D_1 & A_1 &  A_2 & A_3
\\[1mm]
\Of_{\ro \times q} & F(0)A_4 & F(0)A_5 - F(-r) & -\dhat{M},   
\end{bmatrix}, \;\; \Ff^{-1} = \int_{-r}^{0} \bm{f}(\tau) \bm{f}^\top(\tau) \dif \tau 
\end{equation}
\end{thm}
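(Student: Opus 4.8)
The plan is to verify that the Krasovskii functional $v$ of \eqref{KL}, with $P,S,U$ delivered by \eqref{positiv}--\eqref{meitrix}, meets the hypotheses \eqref{posit}--\eqref{neg} of Lemma~\ref{Lemma 2} and the dissipation inequality \eqref{diss} of Definition~\ref{def1}. Write $\bm\chi(t):=\int_{-r}^{0}F(\tau)\bm y(t+\tau)\dif\tau$, $\bm\eta(t):=\Col\big(\bm x(t),\bm\chi(t)\big)$ and $\bm\zeta(t):=\Col\big(\bm w(t),\bm x(t),\bm y(t-r),\bm\chi(t)\big)\in\mR^{q+n+\nu+\ro}$, so that $v=\bm\eta^{\top}P\bm\eta+\int_{-r}^{0}\bm y^{\top}(t+\tau)[S+(\tau+r)U]\bm y(t+\tau)\dif\tau$ and $\bm\eta(t)=H^{\top}\bm\zeta(t)$ with $H$ as in \eqref{F}. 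Observe first that $\norm{A_5}<1$ already secures well-posedness of \eqref{openloop} and uniform input-to-state stability of its difference sub-equation $\bm y(t)=A_4\bm x(t)+A_5\bm y(t-r)$, i.e.\ the prerequisites required to invoke Lemma~\ref{Lemma 2}.

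I would first establish the two-sided estimate \eqref{posit}. The upper bound is routine: dominating each summand of $v$ by $\norm{P}$, $\norm{S}$ and $r\norm{U}$ times the obvious squared norms, and bounding $\bm\chi$ by a fixed multiple of $\norm{\bm\phi(\cdot)}_{\infty}$, gives $v(\bm\xi,\bm\phi(\cdot))\le\epsilon_2\big(\norm{\bm\xi}_2\ma\norm{\bm\phi(\cdot)}_{\infty}\big)^{2}$. For the lower bound the crucial device is the generalized Jensen inequality for the Kronecker-structured distributed term: for every $W\succ0$,
\[
\textstyle\int_{-r}^{0}\bm y^{\top}(\tau)W\bm y(\tau)\dif\tau\;\ge\;\Big[\int_{-r}^{0}F(\tau)\bm y(\tau)\dif\tau\Big]^{\!\top}\!(\Ff\otimes W)\Big[\int_{-r}^{0}F(\tau)\bm y(\tau)\dif\tau\Big],
\]
which I would obtain by minimizing $\int_{-r}^{0}\big(\bm y(\tau)-(\bm f^{\top}(\tau)\otimes I_{\nu})\bm c\big)^{\!\top}W\big(\bm y(\tau)-(\bm f^{\top}(\tau)\otimes I_{\nu})\bm c\big)\dif\tau\ge0$ over $\bm c\in\mR^{\ro}$, the closed form resting on $(\bm f(\tau)\otimes I_{\nu})W(\bm f^{\top}(\tau)\otimes I_{\nu})=(\bm f(\tau)\bm f^{\top}(\tau))\otimes W$ and $(I_{d}\otimes W)(\Ff\otimes W^{-1})(I_{d}\otimes W)=\Ff\otimes W$, both instances of the mixed-product rule of Lemma~\ref{Lemma 1}. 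Applying it with $W=S$ to the $S$-part of $v$, discarding the nonnegative $(\tau+r)U$-part, and evaluating at $\bm x=\bm\xi$, $\bm y(\cdot)=\bm\phi(\cdot)$, yields $v(\bm\xi,\bm\phi(\cdot))\ge\bm\eta^{\top}\big(P+[\Of_{n}\oplus(\Ff\otimes S)]\big)\bm\eta\ge\lambda_{\min}\big(P+[\Of_{n}\oplus(\Ff\otimes S)]\big)\norm{\bm\xi}_{2}^{2}$, so the first inequality of \eqref{positiv} supplies $\epsilon_1$; also $v(\bm 0_{n},\bm 0_{\nu}(\cdot))=0$ is immediate.

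The heart of the argument is the computation of $\dot v$ along \eqref{openloop}, with $\tfrac{\dif^{\rlm}}{\dif t}$ read as a right-derivative to accommodate the piecewise-continuous $\bm y$. After the change of variable $s=t+\tau$ and Leibniz's rule, and using $F'(\tau)=\bm f'(\tau)\otimes I_{\nu}=(M\otimes I_{\nu})F(\tau)=:\dhat{M}F(\tau)$ from Assumption~\ref{as1}, one gets $\dot{\bm\chi}(t)=F(0)\bm y(t)-F(-r)\bm y(t-r)-\dhat{M}\bm\chi(t)$; substituting $\bm y(t)=A_4\bm x(t)+A_5\bm y(t-r)$ and $\ac{A}_3(\tau)=A_3F(\tau)$ then gives $\dot{\bm\eta}(t)=\Theta\bm\zeta(t)$ with $\Theta$ as in \eqref{F}, whence the $P$-term of $\dot v$ equals $\bm\zeta^{\top}\Sy(HP\Theta)\bm\zeta$. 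Differentiating the remaining integral of $v$ produces $\bm y^{\top}(t)(S+rU)\bm y(t)-\bm y^{\top}(t-r)S\bm y(t-r)-\int_{t-r}^{t}\bm y^{\top}(s)U\bm y(s)\dif s$; bounding the last term by the displayed inequality with $W=U$ and writing $\bm y(t)=\Ga\bm\zeta(t)$ and $\bm y(t-r)$ as the matching block of $\bm\zeta(t)$ gives $\dot v\le\bm\zeta^{\top}\big[\Sy(HP\Theta)+\Ga^{\top}(S+rU)\Ga-(\Of_{q}\oplus\Of_{n}\oplus S\oplus(\Ff\otimes U))\big]\bm\zeta$. Since $\bm z(t)=\Sig\bm\zeta(t)$ and $\bm w(t)=E\bm\zeta(t)$ with $E:=[\,I_{q}\ \ \Of_{q\times(n+\nu+\ro)}\,]$, the supply rate reads $s(\bm z,\bm w)=\bm\zeta^{\top}\big[\Sig^{\top}J_1\Sig+\Sy(\Sig^{\top}J_2E)+E^{\top}J_3E\big]\bm\zeta$; subtracting, the blocks $\Of_{q}$ and $E^{\top}J_3E=J_3\oplus\Of_{n+\nu+\ro}$ combine into $-\big(J_3\oplus\Of_n\oplus S\oplus(\Ff\otimes U)\big)$, the off-diagonal term $\Sig^{\top}J_2E=[\,\Sig^{\top}J_2\ \ \Of\,]$ matches the last summand of \eqref{fi}, and one arrives at $\dot v-s\le\bm\zeta^{\top}\big(\bm\Fi-\Sig^{\top}J_1\Sig\big)\bm\zeta$.

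Finally, since $J_1\prec0$ and hence $J_1^{-1}\prec0$, the Schur complement makes \eqref{meitrix} equivalent to $\bm\Fi-\Sig^{\top}J_1\Sig\prec0$; therefore $\dot v-s\le0$ for every $t$, which is \eqref{diss}, i.e.\ dissipativity. Because $-\Sig^{\top}J_1\Sig\succeq0$ we also have $\bm\Fi\preceq\bm\Fi-\Sig^{\top}J_1\Sig\prec0$. For stability, put $\bm w\equiv0$, so that $s=\bm z^{\top}J_1\bm z=\bm\zeta^{\top}\Sig^{\top}J_1\Sig\bm\zeta$ and the dissipation inequality reduces to $\dot v\le\bm\zeta^{\top}\bm\Fi\bm\zeta\le\lambda_{\max}(\bm\Fi)\norm{\bm\zeta}_{2}^{2}\le-\epsilon_3\norm{\bm x(t)}_{2}^{2}$ with $\epsilon_3:=-\lambda_{\max}(\bm\Fi)>0$, using $\norm{\bm\zeta}_2\ge\norm{\bm x(t)}_2$. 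Combined with the bounds of the second paragraph, this verifies \eqref{posit}--\eqref{neg}, so Lemma~\ref{Lemma 2} yields global uniform asymptotic stability at the origin. I expect the main obstacle to be the third paragraph: differentiating the Kronecker-structured distributed term correctly through $\bm f'=M\bm f$ and reassembling every term into the single congruence $\bm\zeta^{\top}(\cdot)\bm\zeta$ so that it matches $\bm\Fi$ of \eqref{fi} exactly; the generalized Jensen inequality is the one genuinely nontrivial auxiliary estimate, whereas the norm bounds and the Schur-complement step are routine.
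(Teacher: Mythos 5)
Your proposal is correct and follows essentially the same route as the paper: the same upper/lower bounds on the functional (the paper invokes the Kronecker--Jensen inequality as Lemma~5 of \cite{Feng201660}, which you rederive by completion of squares), the same differentiation of the distributed term via $\bm f'=M\bm f$ leading to $\Sy(HP\Theta)$, the same Jensen bound with $U$ yielding $\dot v - s \le \bm\chi^\top(\bm\Fi-\Sig^\top J_1\Sig)\bm\chi$, and the same Schur-complement step to reach \eqref{meitrix}. Your explicit $\bm w\equiv 0$ argument for \eqref{neg} is merely a spelled-out version of what the paper leaves implicit.
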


\begin{proof}
It is obvious to see that \eqref{KL} satisfies the property $\exists \la;\eta>0$: $\forall t \in \mT$,
\begin{multline}\label{upp}
 v(\bm{x}(t), \bm{y}(t+\dt)) \leq 
\begin{bmatrix}
\bm{x}(t)
\\
\int_{-r}^{0} F(\tau)\bm{y}(t+\tau) \dif \tau
\end{bmatrix}^\top \la \begin{bmatrix}
\bm{x}(t)
\\
\int_{-r}^{0} F(\tau)\bm{y}(t+\tau) \dif \tau
\end{bmatrix} 
+ \int_{-r}^{0} \bm{y}^\top(t+\tau) \la \bm{y}(t+\tau) \dif \tau 
\\
\leq \la \norm{\bm{x}(t)}_{2}^{2} + [*] \la \textstyle \int_{-r}^{0} F(\tau) \bm{y}(t+\tau) \dif \tau + \la \norm{\bm{y}(t+\dt)}_{\infty}^2   
\leq \la\norm{\bm{x}(t)}_{2}^{2} + \la \norm{\bm{y}(t+\dt)}_{\infty}^2 
\\
+ [*]\left( \eta \Ff \otimes I_n \right)  \textstyle \int_{-r}^{0} F(\tau) \bm{y}(t+\tau) \dif \tau
\leq \la\norm{\bm{x}(t)}_{2}^{2} + \la \norm{\bm{y}(t+\dt)}_{\infty}^2 
+   \textstyle \int_{-r}^{0} [*] \eta \bm{y}(t+\tau) \dif \tau 
\\
\leq \la\norm{\bm{x}(t)}_{2}^{2} + \left( \la + \eta r  \right)  \norm{\bm{y}(t+\dt)}_{\infty}^2
\leq \left( \la + \eta r  \right)\norm{\bm{x}(t)}_{2}^{2} + \left( \la + \eta r \right)  \norm{\bm{y}(t+\dt)}_{\infty}^2 
\\
\leq 2\left( \la + \eta r  \right) \left( \norm{\bm{x}(t)}_{2} \ma \norm{\bm{y}(t+\dt)}_{\infty} \right)^2.
\end{multline}
which demonstrates that \eqref{KL} satisfies  
\begin{equation}\label{app1}
\exists \ep_2 > 0: \forall t \in \mT, \;\; v(\bm{x}(t), \bm{y}(t+\dt))
\leq \ep_2 \left(  \bm{x}(t) \ma \norm{\bm{y}(t + \dt)}_{\mathit{\infty}} \right)^2.
\end{equation} 

Applying the Lemma 5 in \cite{Feng201660} to the integral term 
$\int_{-r}^{0} \bm{y}^\top(t+\tau) S \bm{y}(t+\tau) \dif \tau $ in \eqref{KL} with $U \succ 0 $ and the fact that $\bm{y}(t+\dt) \in \fC_{\circ}^{\bullet} ( [-r,0); \mathbb{R}^\nu) \subset \fset{L}_\mathit{2} ( [-r,0); \mathbb{R}^\nu)$, yields 
\begin{equation}\label{ini}
\textstyle \forall t \in \mT, \;\; \int_{-r}^{0} \bm{y}^\top(t+\tau) S \bm{y}(t+\tau) \dif \tau \geq \left( \int_{-r}^{0} F(\tau) \bm{y}(t+\tau) \dif \tau \right)^\top \Ff \otimes S \int_{-r}^{0} F(\tau) \bm{y}(t+\tau) \dif \tau.
\end{equation} By considering \eqref{ini} with \eqref{KL}, one can conclude that the feasible solution of \eqref{positiv} infers the existence of \eqref{KL} satisfies \eqref{posit} considering  the right limit substitution $t = \tau, \bm{x}(\tau) = \bm{\xi}, \bm{y}(\tau+\dt) = \bm{\fii}(\dt)$ and \eqref{app1}. 

Now we start to derive the stability conditions inferring \eqref{neg} and \eqref{diss}. 

Differentiate \( v(\bm{x}(t), \bm{y}(t+\dt)) \) alongside the trajectory of \eqref{openloop} and considering \eqref{supply} and the relation 
\begin{multline}\label{inte}
\textstyle \frac{\dif }{\dif t}\int_{t-r}^{t} F(\tau) \bm{y}(\tau) \dif \tau =  F(0)  \bm{y}(t) - F(-r) \bm{y}(t-r) 
-  (M \otimes I_{\nu})
\int_{-r}^{0} F(\tau) \bm{y}(t+\tau) \dif \tau 
= F(0) A_4\bm{x}(t) 
\\
\textstyle + \left[ F(0)A_5 - F(-r) \right] \bm{y}(t-r)
- \dhat{M} \int_{-r}^{0} F(\tau) \bm{y}(t+\tau) \dif \tau,
\end{multline}
where $\dhat{M} = M \otimes I_{\nu}$.  Then we have
\begin{gather}
\begin{aligned}\label{iquissen}
& \dot{v}(\bm{x}(t),\bm{y}(t+\cdot))  - s(\bm{z}(t),\bm{w}(t)) 
\\
& = \bm{\chi}^\top(t) \Sy \left( H P \Theta \right) \bm{\chi}(t) + \bm{\chi}^\top (t) \left[ \Ga^\top \!\! \left( S + rU \right) \Ga - \left( J_3 \oplus \Of_{n} \oplus S \oplus \Of_{\ro} \right) \right] \bm{\chi}(t)
\\
& - \bm{\chi}^\top(t) \left( \Sig^\top J_1 \Sig + \Sy\left( \begin{bmatrix}  \Sig^\top J_2 & \Of_{(n + \nu + \ro + q) \times (n + \nu + \ro) } \end{bmatrix} \right) \right] \bm{\chi}(t) - \textstyle \int_{-r}^{\,0} \bm{y}^\top(t+\tau) U  \bm{y}(t+\tau) \dif \tau,
\end{aligned}
\end{gather}	
where $ \Ga $ and $\Sigma$ have been defined in the statements of Theorem \ref{Theorem 1}
 and
\begin{equation}\label{chi}
\begin{gathered}
\bm{\chi}(t):= \Col 
	\textstyle \left( \bm{w}(t), \; \bm{x}(t), \; \bm{y}(t-r),  \;  \int_{-r}^{0} F(\tau) \bm{y}(t + \tau) \dif \tau  \right).
\end{gathered}
\end{equation}

Let $U \succ 0$ and apply the Lemma 5 in \cite{Feng201660} to the integral 
\( \int_{-r}^0 \bm{y}^\top(t+\tau)  U  \bm{y}(t+\tau) \dif \tau \) in \eqref{iquissen} similar to \eqref{ini}. It produces 
\begin{equation}\label{in2}
\textstyle \forall t \in \mT, \;\; \int_{-r}^{0} \bm{y}^\top(t+\tau) U \bm{y}(t+\tau) \dif \tau \geq \left( \int_{-r}^{0} F(\tau) \bm{y}(t+\tau) \dif \tau \right)^\top \left( \Ff \otimes U \right) \int_{-r}^{0} F(\tau) \bm{y}(t+\tau) \dif \tau.
\end{equation} 
Now considering \eqref{in2} with \eqref{iquissen}, we have
\begin{equation}\label{kendissen}
\forall t \in \mT,\;  \dot{v}(\bm{x}(t), \bm{y}(t+\cdot)) - s(\bm{z}(t),\bm{w}(t)) \leq 
\bm{\chi}^\top (t) \left(\bm{\Phi} -  \Sig^\top J_1 \Sig \right )
\bm{\chi}(t),
\end{equation}
where $\bm{\Fi}$ and $\Sig$ have been defined in \eqref{fi} and \( \bm{\chi} (t) \) have been defined in \eqref{chi}. Based on the structure of \eqref{kendissen}, it is easy to see that if  $U \succ 0$ and 
\begin{equation}\label{meitriks}
 \bm{\Phi} - \Sig^\top J_1 \Sig \prec 0, \;\; 
\end{equation}
are satisfied then the dissipative inequality in  
\eqref{diss} : \( \dot{v}(\bm{x}(t),\bm{y}(t+\cdot))  - s(\bm{z}(t),\bm{w}(t)) \leq 0 \) holds $\forall t \in \mT$. Furthermore, given $J_1 \prec 0$ with the structure of $\bm{\Phi} - \Sig^\top J_1 \Sig \prec 0$  and considering the properties of positive definite matrices, it is obvious that the feasible solution of  \eqref{meitriks} with $U \succ 0$ infers the existence of \eqref{KL} satisfying \eqref{diss} and \eqref{neg} considering the definition of \eqref{difer}.

On the other hand, given $J_1 \prec 0$, applying Schur complement to \eqref{meitriks} enables one to conclude that given $U \succ 0$, \eqref{meitriks} holds if and only if \eqref{meitrix}
which is now a convex matrix inequality. Since $U \succ 0$ is included in \eqref{positiv}, thus one can conclude that the feasible solutions of \eqref{positiv}--\eqref{meitrix} infer the existence of \eqref{KL} satisfying \eqref{posit},\eqref{neg} and \eqref{diss}.
\end{proof}

\subsection{Dissipative stability conditions with slack variables}
In this subsection, we derive dissipative conditions via the following Finsler and Projection Lemmas. The result can be considered as a  generalization of the approach in  \cite{de2001stability} to handle delay systems. Furthermore, we prove that the conditions with slack variables are in fact equivalent to  Theorem \ref{Theorem 1} in terms of feasibility.
 
\begin{lem}[Finsler Lemma \cite{de2001stability}] \label{Lemma 3} Given \( n;p;q \in \mathbb{N} \), $ \Pi \in \mathbb{S}^{n}$,  $P \in \mathbb{R}_{[q]}^{p \times n}$ such that $q < n$, then the following propositions are equivalent: 
\begin{gather}
\mathbf{x}^\top \Pi \mathbf{x} < 0, \forall \mathbf{x} \in \left\{ \mathbf{y} \in \mathbb{R}^{n} \!\setminus 
\! \{ \bm{0} \} : P \mathbf{y} = \bm{0}_m \right\} \label{ke}
\\[1mm]
\exists Y \in \mR^{n \times p}: \;\; \Pi +  \Sy( Y P) \prec 0, \label{mei}
\\
P_{\bot}^\top \Pi P_{\bot} \prec 0, \label{pra}
\end{gather}
where the columns of \( P_{\bot} \) contains any basis of the null space of \( P \).
\end{lem}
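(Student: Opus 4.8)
The plan is to prove the three statements pairwise equivalent through a short cycle, isolating the one nontrivial implication. First I would dispose of $\eqref{ke}\Leftrightarrow\eqref{pra}$: since $P_{\bot}$ has full column rank with columns spanning $\ker P$, every $\mathbf{x}$ with $P\mathbf{x}=\bm{0}$ is $\mathbf{x}=P_{\bot}\mathbf{v}$ for a unique $\mathbf{v}$, with $\mathbf{x}\neq\bm{0}$ iff $\mathbf{v}\neq\bm{0}$, and $\mathbf{x}^\top\Pi\mathbf{x}=\mathbf{v}^\top(P_{\bot}^\top\Pi P_{\bot})\mathbf{v}$; hence the form is negative on $\ker P\setminus\{\bm{0}\}$ exactly when $P_{\bot}^\top\Pi P_{\bot}\prec0$. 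The hypothesis $q<n$ enters here only to guarantee $\ker P\neq\{\bm{0}\}$, so that $\eqref{ke}$ is non-vacuous and $P_{\bot}$ is nonempty. The implication $\eqref{mei}\Rightarrow\eqref{ke}$ is then immediate: if $P\mathbf{x}=\bm{0}$ then $\mathbf{x}^\top\Sy(YP)\mathbf{x}=\mathbf{x}^\top Y(P\mathbf{x})+(P\mathbf{x})^\top Y^\top\mathbf{x}=0$, so $\mathbf{x}^\top\Pi\mathbf{x}=\mathbf{x}^\top\big(\Pi+\Sy(YP)\big)\mathbf{x}<0$ for every $\mathbf{x}\neq\bm{0}$.

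The substantive step is $\eqref{ke}\Rightarrow\eqref{mei}$, and I would route it through a scalar multiplier. The claim is that $\eqref{ke}$ forces the existence of a scalar $\sigma>0$ with $\Pi-\sigma P^\top P\prec0$; granting this, setting $Y:=-\tfrac{1}{2}\sigma P^\top\in\mR^{n\times p}$ yields $\Sy(YP)=-\sigma P^\top P$, hence $\Pi+\Sy(YP)\prec0$, which is $\eqref{mei}$. To prove the claim I would argue by contradiction and compactness: if for every $k\in\mN$ the matrix $\Pi-kP^\top P$ is not negative definite, choose $\mathbf{x}_k\in\mR^{n}$ with $\norm{\mathbf{x}_k}_{2}=1$ and $\mathbf{x}_k^\top\Pi\mathbf{x}_k\geq k\norm{P\mathbf{x}_k}_{2}^{2}\geq0$; along a subsequence $\mathbf{x}_k\to\mathbf{x}^{\star}$ with $\norm{\mathbf{x}^{\star}}_{2}=1$, and since $0\leq k\norm{P\mathbf{x}_k}_{2}^{2}\leq\mathbf{x}_k^\top\Pi\mathbf{x}_k\leq\norm{\Pi}$ we get $\norm{P\mathbf{x}_k}_{2}\to0$, so $P\mathbf{x}^{\star}=\bm{0}$ by continuity, while $(\mathbf{x}^{\star})^\top\Pi\mathbf{x}^{\star}=\lim\mathbf{x}_k^\top\Pi\mathbf{x}_k\geq0$, contradicting $\eqref{ke}$. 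Together with the first paragraph, the three facts $\eqref{ke}\Leftrightarrow\eqref{pra}$, $\eqref{mei}\Rightarrow\eqref{ke}$ and $\eqref{ke}\Rightarrow\eqref{mei}$ make $\eqref{ke}$, $\eqref{mei}$, $\eqref{pra}$ pairwise equivalent.

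I expect the compactness argument to be the only delicate point: one must extract the convergent subsequence on the compact unit sphere and control the two limits — those of $\norm{P\mathbf{x}_k}_{2}$ and of $\mathbf{x}_k^\top\Pi\mathbf{x}_k$ — simultaneously, the crux being that $\norm{P\mathbf{x}_k}_{2}^{2}$ is squeezed between $0$ and $\norm{\Pi}/k$. Everything else is routine linear algebra. A purely algebraic alternative, which I would keep in reserve, is to use $\rank P=q$ to choose an invertible change of variables bringing $P$ to a form in which $\ker P$ is a coordinate subspace, partition $\Pi$ conformally, identify $\eqref{pra}$ with negative definiteness of the diagonal block indexed by that subspace, and then complete the square via a Schur complement to exhibit $Y$ explicitly; this sidesteps analysis altogether at the cost of more computation.
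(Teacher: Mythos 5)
Your proposal is correct. Note that the paper itself offers no proof of this lemma---it is imported verbatim from de Oliveira and Skelton---so there is nothing to compare against; your argument is the standard one for this result and is complete: the bijection $\mathbf{x}=P_{\bot}\mathbf{v}$ between $\ker P$ and $\mathbb{R}^{n-q}$ gives the first equivalence, restricting the quadratic form to $\ker P$ gives the easy implication, and the compactness/contradiction argument producing a scalar $\sigma>0$ with $\Pi-\sigma P^\top P\prec 0$ (whence $Y=-\tfrac{1}{2}\sigma P^\top$) is exactly the classical route to the hard direction. The only cosmetic remark is that the squeeze $0\leq k\norm{P\mathbf{x}_k}_2^2\leq\norm{\Pi}$ should be read along the chosen subsequence, and the subscript $m$ in the constraint set of the first proposition is a typo in the paper's statement (it should be $p$); neither affects your argument.
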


\begin{lem}[Projection Lemma] \cite{Feng201660} \label{Lemma 4} Given \( n;p;q \in \mathbb{N}, \; \Pi \in \mathbb{S}^{n},  P \in \mathbb{R}^{q \times n}, Q \in \mathbb{R}^{p \times n}\), there exists \(\Yp \in \mathbb{R}^{p \times q}\) such that the following two propositions are equivalent : 
\begin{align}
    & \Pi + P^\top \Yp^\top Q + Q^\top \Yp P \prec 0, \label{slack}
    \\[2mm]
    & P_{\bot}^\top \Pi P_{\bot} \prec 0 \;\; and \;\;  Q_{\bot}^\top \Pi  Q_{\bot} \prec 0, \label{projec}
\end{align}
\vspace{2mm}
where \( P_{\bot} \) and \( Q_{\bot} \) are matrices in which the columns contain any basis of the null space of \( P \) and \( Q \), respectively.
\end{lem}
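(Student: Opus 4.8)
The implication $\eqref{slack}\Rightarrow\eqref{projec}$ is immediate: if $\Yp$ satisfies $\Pi+P^\top\Yp^\top Q+Q^\top\Yp P\prec 0$, then for any $\bm{z}\neq\bm{0}$ the vector $\bm{x}=P_{\bot}\bm{z}$ has $P\bm{x}=\bm{0}$, so the two mixed terms vanish and $\bm{z}^\top P_{\bot}^\top\Pi P_{\bot}\bm{z}=\bm{x}^\top(\Pi+P^\top\Yp^\top Q+Q^\top\Yp P)\bm{x}<0$; hence $P_{\bot}^\top\Pi P_{\bot}\prec 0$, and the same computation with $\bm{x}=Q_{\bot}\bm{z}$ gives $Q_{\bot}^\top\Pi Q_{\bot}\prec 0$. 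So the plan concentrates on the converse $\eqref{projec}\Rightarrow\eqref{slack}$, via two reductions followed by an explicit construction.

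First, deleting linearly dependent rows of $P$ and of $Q$ changes neither $\ker P$ nor $\ker Q$, hence not $\eqref{projec}$, while a slack variable solving the trimmed problem extends to the original by padding with zero blocks; thus we may assume $P$ and $Q$ have full row rank. Second, I would split $\mR^{n}=W\oplus\mathcal{N}$ orthogonally with $W=\Span(P^\top)+\Span(Q^\top)$ and $\mathcal{N}=(\ker P)\cap(\ker Q)=W^{\perp}$. In coordinates adapted to this splitting the $\mathcal{N}$-block of $\Pi$ is negative definite (it is a principal submatrix of $P_{\bot}^\top\Pi P_{\bot}\prec 0$, since $\mathcal{N}\subseteq\ker P$), whereas $\Sy(Q^\top\Yp P)$ is supported entirely on the $W$-block because $\Span(P^\top)$ and $\Span(Q^\top)$ both lie in $W$. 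Taking a Schur complement against the $\mathcal{N}$-block then reduces the problem to the essential case $\Span(P^\top)+\Span(Q^\top)=\mR^{n}$, for a new symmetric datum $\widetilde{\Pi}$ on $W$ that still satisfies the analogue of $\eqref{projec}$ (the Schur complement is compatible with restriction to $\ker P\cap W$ and to $\ker Q\cap W$).

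In the essential case I would invoke Lemma \ref{Lemma 3} on each half of the hypothesis to obtain scalars $\mu,\lambda>0$ with $\widetilde{\Pi}-\mu P^\top P\prec 0$ and $\widetilde{\Pi}-\lambda Q^\top Q\prec 0$, then write down $\Yp$ explicitly --- of the Gahinet--Apkarian / elimination-lemma type, scaled by a large parameter --- and verify $\widetilde{\Pi}+\Sy(Q^\top\Yp P)\prec 0$ by Schur complements against these two inequalities. I expect this last step to be the main obstacle: a single $\Yp$ must simultaneously keep the form negative on directions inside $\ker P$ and inside $\ker Q$ --- subspaces that are in general neither orthogonal nor complementary --- and negative on the remaining directions as well, and the bookkeeping that shows one scaling does all of this at once is the crux. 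An alternative avoiding the explicit formula is a compactness/minimax argument showing $\inf_{\Yp}\lambda_{\max}\big(\widetilde{\Pi}+\Sy(Q^\top\Yp P)\big)<0$, the infimum being evaluated by duality and equal, after normalization, to $\max\big(\lambda_{\max}(P_{\bot}^\top\Pi P_{\bot}),\lambda_{\max}(Q_{\bot}^\top\Pi Q_{\bot})\big)$. As a sanity check, taking $P=Q$ collapses both reductions and the construction to the statement and proof of the Finsler Lemma \ref{Lemma 3}, so the argument runs parallel to, and properly generalizes, that one.
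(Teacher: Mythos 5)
The paper does not actually prove Lemma \ref{Lemma 4}; it is quoted from \cite{Feng201660} (ultimately a standard elimination-lemma result of Gahinet--Apkarian type), so there is no in-paper argument to compare yours against, and I am judging the proposal on its own terms. Your easy direction $\eqref{slack}\Rightarrow\eqref{projec}$ is correct, and the two preliminary reductions --- trimming $P$ and $Q$ to full row rank (with the padding argument to recover $\Yp$), then Schur-complementing out the block on $\mathcal{N}=\ker P\cap\ker Q$, where the term $\Sy(Q^\top\Yp P)$ has no support --- are sound and are indeed how the standard proofs begin.

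The difficulty is that the argument stops exactly where the content of the lemma lies. In the essential case you produce, via Lemma \ref{Lemma 3}, two \emph{separate} certificates $\widetilde{\Pi}-\mu P^\top P\prec 0$ and $\widetilde{\Pi}-\lambda Q^\top Q\prec 0$, and then assert that an elimination-lemma-type $\Yp$ ``scaled by a large parameter'' will close the argument, while conceding that the verification is ``the main obstacle.'' That obstacle \emph{is} the theorem: the two Finsler certificates are not by themselves sufficient input for a one-line completion, because a single $\Yp$ must handle $\ker P$, $\ker Q$ and their (generally non-orthogonal, non-complementary) interaction at once. The known constructions require a further decomposition of $W$ adapted to $\ker P$, $\ker Q$ and a complement of their sum, an explicit block choice of $\Yp$, and a completion-of-squares/Schur argument showing one scaling dominates all cross terms simultaneously; none of that bookkeeping is present. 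The fallback minimax route is also not a proof as stated: the claimed value of $\inf_{\Yp}\lambda_{\max}\big(\widetilde{\Pi}+\Sy(Q^\top\Yp P)\big)$ is unjustified, the quantities $\lambda_{\max}(P_{\bot}^\top\Pi P_{\bot})$ depend on the non-canonical choice of $P_{\bot}$ (so ``after normalization'' is doing real work), and the infimum need not be attained, so negativity of the infimum would not yet yield a finite $\Yp$ satisfying $\eqref{slack}$. As it stands the direction $\eqref{projec}\Rightarrow\eqref{slack}$ is not established.
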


\begin{thm}\label{Theorem 2}
Given all the prerequisites in Theorem \ref{Theorem 1}, \eqref{openloop} is globally uniformly asymptotically stable at its origin and dissipative with respect to \eqref{supply},  if there exist $Y \in \mR^{(2n + \nu + \ro + q) \times n}$ and $ P \in \mathbb{S}^{n + \ro}$ and $S;U \in \mathbb{S}^{\nu}$ such that \eqref{positiv} and the following inequality hold,
\begin{gather}\label{seff}
\begin{bmatrix} J_1^{-1} & \dtl{\Sig} \\ * & \dtl{\bm{\Fi}} \end{bmatrix} + \Sy \left( \begin{bmatrix} \Of_{(m + q) \times n} \\ Y \end{bmatrix} \begin{bmatrix} \Of_{n \times m} & \mathbf{A} & -I_n \end{bmatrix} \right) \prec 0, 
\end{gather}
where $\mathbf{A} = \begin{bmatrix} D_1  & A_1 & A_2 & A_3 \end{bmatrix}$ and $\dtl{\Ga} := \begin{bmatrix} \Ga & \Of_{\nu \times n} \end{bmatrix}$ and $\dtl{\Sigma} := \begin{bmatrix}
	\Sigma & \Of_{m \times n}	
	\end{bmatrix}$ with $\Ga$, $\Sig$ defined in Theorem \ref{Theorem 1}, 
and	 
\begin{multline}\label{til}
\bm{\dtl{\Fi}} = \Sy \left( \begin{bmatrix}
H \\ \Of_{ n \times (\ro + n)} \end{bmatrix} P 
\begin{bmatrix}
\Of_{n \times q}  & \Of_{n} & \Of_{n \times \nu} & \Of_{n \times \ro} & I_n
\\
\Of_{\ro \times q} & F(0)A_4 & F(0)A_5 - F(-r) & -M & \Of_{\ro \times n}
\end{bmatrix} \right) 
\\
+ \dtl{\Ga}^\top \!\! \left( S + rU \right) \dtl{\Ga} - \Big( J_3 \oplus \Of_{n} \oplus S \oplus (\Ff \otimes U) \oplus \Of_{n} \Big) - \Sy\left( \begin{bmatrix}  \dtl{\Sig}^\top J_2 & \Of_{(2n + \nu + \ro + q) \times (2n + \nu + \ro) } \end{bmatrix} \right) . 
\end{multline}
with $H$ and $\Ff$ defined in \eqref{F}.
\end{thm}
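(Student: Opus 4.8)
\medskip
\noindent\textbf{Proof strategy.} The plan is to reduce Theorem~\ref{Theorem 2} to Theorem~\ref{Theorem 1}: under the standing hypothesis $J_1\prec0$ I will show that \eqref{positiv}--\eqref{seff} is feasible if and only if \eqref{positiv}--\eqref{meitrix} is, after which the stability and dissipativity conclusions are precisely those already established in Theorem~\ref{Theorem 1}. The guiding observation is that \eqref{til} is the ``dynamical-constraint'' counterpart of \eqref{fi}: instead of substituting $\dot{\bm{x}}(t)=A_1\bm{x}(t)+A_2\bm{y}(t-r)+A_3\!\int_{-r}^{0}F(\tau)\bm{y}(t+\tau)\dif\tau+D_1\bm{w}(t)$ while differentiating \eqref{KL}, one keeps $\dot{\bm{x}}(t)$ as a free coordinate and carries the first line of \eqref{openloop} as the linear equality $\begin{bmatrix}\mathbf{A} & -I_n\end{bmatrix}\bm{\zeta}(t)=\bm{0}_n$ (which is the first line of \eqref{openloop} rewritten via $\ac{A}_3(\tau)=A_3F(\tau)$, Assumption~\ref{as1}), where $\bm{\zeta}(t):=\Col\big(\bm{w}(t),\,\bm{x}(t),\,\bm{y}(t-r),\,\int_{-r}^{0}F(\tau)\bm{y}(t+\tau)\dif\tau,\,\dot{\bm{x}}(t)\big)$.

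First I would rerun the computation that led to \eqref{iquissen}, now with $\bm{\zeta}(t)$ in place of $\bm{\chi}(t)$: differentiating the $P$-block of \eqref{KL} produces $\dot{\bm{x}}(t)$ directly through a selector block $\begin{bmatrix}\Of_{n\times q} & \Of_{n} & \Of_{n\times\nu} & \Of_{n\times\ro} & I_{n}\end{bmatrix}$, whereas \eqref{inte} is unchanged; writing also $s(\bm{z}(t),\bm{w}(t))$ in the coordinates of $\bm{\zeta}(t)$ reproduces exactly the $\Sy(\cdot)$ term in the first line of \eqref{til} together with the blocks $\dtl{\Ga}^\top(S+rU)\dtl{\Ga}$, $-(J_3\oplus\Of_n\oplus S\oplus\Of_\ro\oplus\Of_n)$ and $-\dtl{\Sig}^\top J_1\dtl{\Sig}-\Sy\big(\begin{bmatrix}\dtl{\Sig}^\top J_2 & \Of\end{bmatrix}\big)$. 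Bounding $\int_{-r}^{0}\bm{y}^\top(t+\tau)U\bm{y}(t+\tau)\dif\tau$ exactly as in \eqref{in2} (Lemma~5 of \cite{Feng201660}, using $U\succ0$ from \eqref{positiv}) converts the $\Of_\ro$ block into $-(\Ff\otimes U)$ and gives, for all $t\in\mT$,
\[ \dot v\big(\bm{x}(t),\bm{y}(t+\cdot)\big)-s\big(\bm{z}(t),\bm{w}(t)\big)\ \le\ \bm{\zeta}^\top(t)\,\big(\dtl{\bm{\Fi}}-\dtl{\Sig}^\top J_1\dtl{\Sig}\big)\,\bm{\zeta}(t), \]
with $\bm{\zeta}(t)$ constrained as above on every trajectory of \eqref{openloop}.

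Next I would apply the Finsler Lemma (Lemma~\ref{Lemma 3}) with $\Pi:=\dtl{\bm{\Fi}}-\dtl{\Sig}^\top J_1\dtl{\Sig}$ and $P:=\begin{bmatrix}\mathbf{A} & -I_n\end{bmatrix}$, which has full row rank $n$ because of its $-I_n$ block; its null space is spanned by the columns of $P_{\bot}:=\big[\begin{smallmatrix}I\\ \mathbf{A}\end{smallmatrix}\big]$ (the map $\bm{\chi}\mapsto(\bm{\chi},\mathbf{A}\bm{\chi})$). By \eqref{pra}$\Leftrightarrow$\eqref{mei}, $P_{\bot}^\top\Pi P_{\bot}\prec0$ holds iff there exists $Y$ with $\Pi+\Sy\big(Y\begin{bmatrix}\mathbf{A}&-I_n\end{bmatrix}\big)\prec0$. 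The step I expect to be the main obstacle is the block identity $P_{\bot}^\top\Pi P_{\bot}=\bm{\Fi}-\Sig^\top J_1\Sig$, i.e.\ that contracting \eqref{til} along $P_{\bot}$ recovers \eqref{fi}. This is checked term by term, using $\dtl{\Sig}P_{\bot}=\Sig$, $\dtl{\Ga}P_{\bot}=\Ga$, the fact that the selector right factor of the $\Sy(\cdot)$ term in \eqref{til} multiplied by $P_{\bot}$ equals $\Theta$ (the $I_n$ block becomes the top block $\mathbf{A}$ of $\Theta$), and the observation that the appended $\oplus\,\Of_n$ block and the zero padding of $\dtl{\Sig}$, $\dtl{\Ga}$ in the $\dot{\bm{x}}$-coordinate contribute nothing. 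Since $P_{\bot}^\top\Pi P_{\bot}$ is exactly the left-hand side of \eqref{meitriks}, we conclude that \eqref{meitriks} holds iff there exists $Y$ with $\Pi+\Sy\big(Y\begin{bmatrix}\mathbf{A}&-I_n\end{bmatrix}\big)\prec0$.

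Finally, since $J_1\prec0$ gives $J_1^{-1}\prec0$, a Schur-complement step turns $\Pi+\Sy\big(Y\begin{bmatrix}\mathbf{A}&-I_n\end{bmatrix}\big)\prec0$ into
\[ \begin{bmatrix}J_1^{-1}&\dtl{\Sig}\\ *&\dtl{\bm{\Fi}}\end{bmatrix}+\Sy\!\left(\begin{bmatrix}\Of_{m\times n}\\ Y\end{bmatrix}\begin{bmatrix}\Of_{n\times m} & \mathbf{A} & -I_n\end{bmatrix}\right)\prec0, \]
which is \eqref{seff} (so the zero top-block of the slack in \eqref{seff} is an automatic by-product, not a restriction), while the same Schur step applied to $\bm{\Fi}-\Sig^\top J_1\Sig\prec0$ yields \eqref{meitrix}. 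Hence \eqref{positiv}--\eqref{seff} is feasible iff \eqref{positiv}--\eqref{meitrix} is, and the conclusion follows from Theorem~\ref{Theorem 1}: \eqref{KL} satisfies \eqref{posit}, \eqref{neg} and \eqref{diss}, so global uniform asymptotic stability follows from Lemma~\ref{Lemma 2} and dissipativity from Definition~\ref{def1}. (One may instead stay self-contained by combining the displayed bound on $\dot v-s$ with the Finsler equivalence directly.) Beyond Theorem~\ref{Theorem 1}, the only real work is the Finsler step and the verification of the block identity.
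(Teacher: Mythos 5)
Your proposal is correct, and its first half (keeping $\dot{\bm{x}}(t)$ as a free coordinate, carrying the dynamics as the equality $[\mathbf{A}\;\;-I_n]\bm{\eta}(t)=\bm{0}_n$, applying Lemma 5 of \cite{Feng201660} to replace the $\Of_\ro$ block by $-(\Ff\otimes U)$, then invoking Finsler and a Schur complement) is exactly the paper's route to \eqref{iquisse}, \eqref{kenstrei}, \eqref{suff} and \eqref{szhra}. Where you genuinely diverge is the endgame. The paper works with the \eqref{ke}$\Leftrightarrow$\eqref{mei} form of Finsler's Lemma to pass from the constrained negativity \eqref{suf} to \eqref{suff}, and then applies the Projection Lemma twice (computing the null-space contractions \eqref{J1}, \eqref{eqpro} and \eqref{iqpro}) to argue that the structured slack in \eqref{seff} loses no generality; the equivalence with Theorem~\ref{Theorem 1} is deferred to a separate Corollary. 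You instead use the \eqref{pra}$\Leftrightarrow$\eqref{mei} form with the explicit null-space basis $P_\bot=\bigl[\begin{smallmatrix}I\\ \mathbf{A}\end{smallmatrix}\bigr]$ and verify the contraction identity $P_\bot^\top(\dtl{\bm{\Fi}}-\dtl{\Sig}^\top J_1\dtl{\Sig})P_\bot=\bm{\Fi}-\Sig^\top J_1\Sig$ (this is precisely \eqref{iku}/\eqref{eqpro} of the paper), which collapses Theorem~\ref{Theorem 2} directly onto \eqref{meitriks} and hence onto Theorem~\ref{Theorem 1}; the zero top block of the slack in \eqref{seff} then falls out of the Schur step for free rather than requiring the Projection Lemma. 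Your version buys a shorter argument that proves the Corollary as a by-product; the paper's version is self-contained in the sense that dissipativity is read off \eqref{suf} without routing through Theorem~\ref{Theorem 1}, and its Projection-Lemma detour is aimed at exhibiting the redundancy of the slack variable explicitly. Both are valid; just make sure you state (as you do parenthetically) that the constrained bound on $\dot v-s$ plus the Finsler equivalence already yields \eqref{neg} and \eqref{diss}, so the stability conclusion does not secretly depend on re-deriving Theorem~\ref{Theorem 1}.
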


\begin{proof}

First of all, note that the conditions in \eqref{upp} and \eqref{positiv} remain unchanged in this case as they can be derived independently without considering the configuration of \eqref{openloop}.

Unlike what has been presented in the proof of Theorem \ref{Theorem 1}, 
$\dot{v}(\bm{x}(t), \bm{y}(t+\dt)) - s(\bm{z}(t), \bm{w}(t))$ can be also formulated as 
\begin{gather}
\begin{aligned}\label{iquisse}
& \dot{v}(\bm{x}(t),\bm{y}(t+\cdot))  - s(\bm{z}(t),\bm{w}(t)) 
\\
& = \bm{\eta}^\top(t) \Sy \left(  \begin{bmatrix}
H \\ \Of_{ n \times (\ro + n)} \end{bmatrix}  P 
\begin{bmatrix}
\Of_{n \times q}  & \Of_{n} & \Of_{n \times \nu} & \Of_{n \times \ro} & I_n
\\[1mm]
\Of_{\ro \times q} & F(0)A_4 & F(0)A_5 - F(-r) & -\dhat{M} & \Of_{\ro \times n}
\end{bmatrix} \right) \bm{\eta}(t) 
\\
& + \bm{\eta}^\top (t) \left[ \dtl{\Ga}^\top \!\! \left( S + rU \right) \dtl{\Ga} - \left( J_3 \oplus \Of_{n} \oplus S \oplus \Of_{\ro} \oplus \Of_{n} \right) \right] \bm{\eta}(t)
\\
& - \bm{\eta}^\top(t) \Big( \dtl{\Sig}^\top J_1 \dtl{\Sig} + \Sy\left( \begin{bmatrix}  \dtl{\Sig}^\top J_2 & \Of_{(2n + \nu + \ro + q) \times (2n + \nu + \ro) } \end{bmatrix} \right) \bm{\eta}(t), 
\\
& - \textstyle \int_{-r}^{\,0} \bm{y}^\top(t+\tau) U  \bm{y}(t+\tau) \dif \tau, \;\; \forall \bm{\eta}(t) \in \{ \bm{\eta}(\dt): \forall t \in \mT, \;\; \begin{bmatrix} \mathbf{A} & -I_n \end{bmatrix} \bm{\eta}(t) = 0 \},  \;\; \forall t \in \mT
\end{aligned}
\end{gather} 
where $\mathbf{A}$ and $\dtl{\Ga}$ and $\dtl{\Sigma}$ have been defined in the statement of Theorem \ref{Theorem 2} and $\bm{\eta}(t):= \Col \left( \bm{\chi}(t), \; \dot{\bm{x}}(t) \right)$ with $\bm{\chi}(t)$ defined in \eqref{chi}. Specifically, the information of \eqref{openloop} in \eqref{iquisse} is characterized by the dynamical constraints $\forall \bm{\eta}(t) \in \kE := \{ \bm{\eta}(\dt): \forall t \in \mT, \;\; \begin{bmatrix} \mathbf{A} & -I_n \end{bmatrix} \bm{\eta}(t) = 0 \}$ where $\dot{\bm{x}}(t)$ is part of $\bm{\eta}(t)$. 

Now applying \eqref{in2} to \eqref{iquisse} assuming $U \succ 0$ in \eqref{positiv} yields 
\begin{equation}\label{kenstrei}
\forall t \in \mT,\; \forall \bm{\eta}(t) \in \kE, \;   \dot{v}_1(\bm{x}(t), \bm{y}(t+\cdot)) - s(\bm{z}(t),\bm{w}(t)) \leq 
\bm{\eta}^\top (t) \left( \dtl{\bm{\Phi}} - \dtl{\Sig}^\top J_1 \dtl{\Sig} \right )
\bm{\eta}(t),
\end{equation}
where  $\dtl{\bm{\Fi}}$ is defined in \eqref{til}.
Furthermore, since \eqref{ke} and \eqref{pra} are equivalent, one can conclude that
\begin{gather} \label{suf}
 \bm{\thet}^\top \left[  \dtl{\bm{\Phi}} - \dtl{\Sigma}^\top J_1 \dtl{\Sigma}  \right] \bm{\theta} < 0, \;\; 
\forall \bm{\thet} \in \Big \{ \mathbf{y} \in \mathbb{R}^{2n + \ro + q} \!\setminus 
\! \{ \bm{0} \} :  \begin{bmatrix} \mathbf{A} & -I_n \end{bmatrix} \mathbf{y} = \bm{0}_{n} \Big\}
\end{gather}
holds if and only if 
\begin{gather}\label{suff}
\exists Y \in \mR^{(2n + \nu + \ro + q) \times n}:\;\;   \dtl{\bm{\Phi}} - \dtl{\Sigma}^\top J_1 \dtl{\Sigma}  
+ Y \begin{bmatrix} \mathbf{A} & -I_n \end{bmatrix} \prec 0,  
\end{gather}
where \eqref{suf} infers the existence of \eqref{KL} satisfying \eqref{neg} and \eqref{diss} considering the properties positive matrices and the right limit substitution $t = \tau, \bm{x}(\tau) = \bm{\xi}, \bm{y}(\tau+\dt) = \bm{\fii}(\dt)$.

Apply Schur complement to \eqref{suff} with $J_1 \prec 0$. It shows the inequality in \eqref{suff} is equivalent to
\begin{multline}\label{szhra}
\begin{bmatrix} J_1^{-1} & \dtl{\Sig} \\ * & \dtl{\bm{\Fi}} \end{bmatrix} + \Sy \left( \begin{bmatrix} \Of_{m \times n} \\ Y \end{bmatrix} \begin{bmatrix} \Of_{n \times m} & \mathbf{A} & -I_n \end{bmatrix} \right) 
\\
=  \begin{bmatrix} J_1^{-1} & \dtl{\Sig} \\ * & \dtl{\bm{\Fi}} \end{bmatrix} + \Sy \left( \begin{bmatrix} \Of_{m \times (2n + \nu + \rho + q)} \\ I_{2n + \nu + \rho + q} \end{bmatrix} Y \begin{bmatrix} \Of_{n \times m} & \mathbf{A} & -I_n \end{bmatrix} \right) \prec 0. 
\end{multline}
By the equivalence between \eqref{projec} and \eqref{slack}, we have \eqref{szhra} holds if and only if 
\begin{equation}\label{J1}
\begin{bmatrix} \Of_{m \times (2n + \nu + \rho + q)} \\ I_{2n + \nu + \rho + q} \end{bmatrix}_{\bot} \begin{bmatrix} J_1^{-1} & \dtl{\Sig} \\ * & \dtl{\bm{\Fi}} \end{bmatrix} [*] = \begin{bmatrix} I_m & \Of_{m \times (2n + \nu + \rho + q) }\end{bmatrix} \begin{bmatrix} J_1^{-1} & \dtl{\Sig} \\ * & \dtl{\bm{\Fi}} \end{bmatrix} [*] = J_1^{-1} \prec 0
\end{equation}
and 
\begin{equation}\label{eqpro}
[*] \begin{bmatrix} J_1^{-1} & \dtl{\Sig} \\ * & \dtl{\bm{\Fi}} \end{bmatrix} \begin{bmatrix} \Of_{n \times m} & \mathbf{A} & -I_n\end{bmatrix}_{\bot} = [*] \begin{bmatrix} J_1^{-1} & \dtl{\Sig} \\ * & \dtl{\bm{\Fi}} \end{bmatrix} \begin{bmatrix}  I_{m + q + n + \nu + \rho} \\ \begin{bmatrix} \Of_{n \times m} & \mathbf{A} \end{bmatrix} \end{bmatrix} \prec 0.
\end{equation}

Now realize that \eqref{J1} holds given \eqref{eqpro}. Thus \eqref{eqpro} is equivalent to \eqref{suff}.  On the other hand, note that \eqref{eqpro} gives 
\begin{multline}\label{iqpro}
\begin{bmatrix} \Of_{(m +q) \times (2n + \nu + \rho)} \\[1mm] I_{2n + \nu + \rho} \end{bmatrix}_{\bot} \begin{bmatrix} J_1^{-1} & \dtl{\Sig} \\ * & \dtl{\bm{\Fi}} \end{bmatrix} [*] = \begin{bmatrix} I_{m+q} 
& \Of_{(m +q) \times (2n + \nu + \rho) }\end{bmatrix} \begin{bmatrix} J_1^{-1} & \dtl{\Sig} \\ * & \dtl{\bm{\Fi}} \end{bmatrix} [*] 
\\
= \begin{bmatrix} J_1^{-1} & D_2 \\  * &  J_3 + \Sy ( D_2^{\top} J_2 ) \end{bmatrix} \prec 0 
\end{multline}
which also infers \eqref{J1}. By Lemma \ref{Lemma 4}, one can conclude that \eqref{eqpro} with \eqref{iqpro} are equivalent to 
\eqref{seff} which subsequently is equivalent to \eqref{szhra}. This in fact shows that there are redundant slack variables in \eqref{szhra} which can be reduced into the form of \eqref{seff}. Thus \eqref{suf} is equivalent to \eqref{seff} which finishes the proof. 
\end{proof}

\begin{coro}
 Theorem \ref{Theorem 1} is equivalent to Theorem \ref{Theorem 2} in terms of feasibility.
\end{coro}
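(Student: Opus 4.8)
The plan is to match the two feasibility problems directly. Both Theorem~\ref{Theorem 1} and Theorem~\ref{Theorem 2} require \eqref{positiv}, so it suffices to show that, for any admissible data and any fixed $P\in\mathbb{S}^{n+\ro}$, $S,U\in\mathbb{S}^{\nu}$, the inequality \eqref{meitrix} is solvable exactly when there exists $Y$ solving \eqref{seff}; the corollary then follows because the $(P,S,U)$-projections of the two feasible sets would coincide. I would start from the equivalences already established in the proof of Theorem~\ref{Theorem 2} (a Schur complement with $J_1\prec0$, followed by Lemma~\ref{Lemma 4} to discard the redundant slack block): solvability of \eqref{seff} in $Y$ is equivalent to solvability of \eqref{suff}, which by the Finsler Lemma (Lemma~\ref{Lemma 3}, equivalence \eqref{mei}$\Leftrightarrow$\eqref{ke}) is equivalent to the projected condition \eqref{suf},
\[
\bm{\theta}^\top\bigl(\dtl{\bm{\Phi}}-\dtl{\Sig}^\top J_1\dtl{\Sig}\bigr)\bm{\theta}<0,\qquad\forall\,\bm{\theta}\neq\bm{0}\ \text{with}\ \begin{bmatrix}\mathbf{A} & -I_n\end{bmatrix}\bm{\theta}=\bm{0}_n .
\]

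Next I would apply the remaining Finsler equivalence \eqref{ke}$\Leftrightarrow$\eqref{pra} of Lemma~\ref{Lemma 3}, with $\Pi=\dtl{\bm{\Phi}}-\dtl{\Sig}^\top J_1\dtl{\Sig}$ and constraint matrix $\begin{bmatrix}\mathbf{A} & -I_n\end{bmatrix}$, which has full row rank since it contains $-I_n$. Because $\ker\begin{bmatrix}\mathbf{A} & -I_n\end{bmatrix}=\bigl\{\Col(\bm{v},\mathbf{A}\bm{v}):\bm{v}\in\mathbb{R}^{n+\nu+\ro+q}\bigr\}$, one may take the null-space basis $P_\bot=\left[\begin{smallmatrix} I_{n+\nu+\ro+q}\\ \mathbf{A}\end{smallmatrix}\right]$, so \eqref{suf} holds if and only if $P_\bot^\top\bigl(\dtl{\bm{\Phi}}-\dtl{\Sig}^\top J_1\dtl{\Sig}\bigr)P_\bot\prec0$. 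The key computation is the congruence identity $P_\bot^\top\bigl(\dtl{\bm{\Phi}}-\dtl{\Sig}^\top J_1\dtl{\Sig}\bigr)P_\bot=\bm{\Phi}-\Sig^\top J_1\Sig$, the left-hand side of \eqref{meitriks}. This is precisely what the block layout of \eqref{til} is built for: pre- and post-multiplying by $P_\bot^\top$ and $P_\bot$ deletes the trailing zero blocks in the $\dot{\bm{x}}$ slot, so that $\dtl{\Ga}=\begin{bmatrix}\Ga & \Of_{\nu\times n}\end{bmatrix}$, $\dtl{\Sig}=\begin{bmatrix}\Sig & \Of_{m\times n}\end{bmatrix}$, $J_3\oplus\Of_n\oplus S\oplus(\Ff\otimes U)\oplus\Of_n$ and $\begin{bmatrix}\dtl{\Sig}^\top J_2 & \Of\end{bmatrix}$ reduce to $\Ga$, $\Sig$, $J_3\oplus\Of_n\oplus S\oplus(\Ff\otimes U)$ and $\begin{bmatrix}\Sig^\top J_2 & \Of\end{bmatrix}$; simultaneously $P_\bot^\top\left[\begin{smallmatrix} H\\ \Of_{n\times(\ro+n)}\end{smallmatrix}\right]=H$, while in the $P$-dependent $\Sy$-summand of \eqref{til} the right factor carries $\Col(I_n,\Of_{\ro\times n})$ in its $\dot{\bm{x}}$ block-column, so that this $I_n$ is contracted against $\mathbf{A}$, adding $\Col(\mathbf{A},\Of_{\ro\times(n+\nu+\ro+q)})$ and turning the right factor into $\Theta$ of \eqref{F}. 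Matching the outcome with \eqref{fi} establishes the identity.

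Combining the two steps, \eqref{suf} is equivalent to $\bm{\Phi}-\Sig^\top J_1\Sig\prec0$, i.e.\ to \eqref{meitriks}, and a Schur complement with $J_1\prec0$ --- exactly as at the end of the proof of Theorem~\ref{Theorem 1} --- converts \eqref{meitriks} into the convex inequality \eqref{meitrix}. Hence \eqref{seff} is solvable in $Y$ precisely when \eqref{meitrix} holds, with the same $P,S,U$; since \eqref{positiv} appears in both statements, the conditions of Theorem~\ref{Theorem 1} and of Theorem~\ref{Theorem 2} are feasible for exactly the same triples $(P,S,U)$, which proves the corollary. The main obstacle is the bookkeeping in the middle step --- keeping the partitioned dimensions aligned so that the $\dot{\bm{x}}$-blocks cancel and the $I_n$-contraction reproduces $\Theta$ and $H$ exactly --- since conceptually nothing is required beyond the Finsler and Schur-complement manipulations already used to prove Theorems~\ref{Theorem 1} and~\ref{Theorem 2}.
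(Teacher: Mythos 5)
Your proposal is correct and follows essentially the same route as the paper: both hinge on the congruence identity that projecting $\dtl{\bm{\Fi}}-\dtl{\Sig}^\top J_1\dtl{\Sig}$ onto the null-space basis $\bigl[\begin{smallmatrix} I\\ \mathbf{A}\end{smallmatrix}\bigr]$ of the dynamical constraint recovers $\bm{\Fi}-\Sig^\top J_1\Sig$ (the paper states this post-Schur-complement as \eqref{iku}, you state it pre-Schur-complement), combined with the equivalences \eqref{seff}$\Leftrightarrow$\eqref{eqpro} already established in the proof of Theorem \ref{Theorem 2}. The only difference is that you carry out the block bookkeeping explicitly where the paper defers to equations (35)--(37) of \cite{Feng201660}, which is a point in your favour but not a different method.
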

\begin{proof}
Note that \eqref{meitrix} can be reformulated into 
\begin{equation}\label{iku}
\begin{bmatrix} J_1^{-1} & \Sig \\ * & \bm{\Fi}   \end{bmatrix} = [*] \begin{bmatrix} J_1^{-1} & \dtl{\Sig} \\ * & \dtl{\bm{\Fi}} \end{bmatrix} \begin{bmatrix}  I_{m + q + n + \nu + \rho} \\ \begin{bmatrix} \Of_{n \times m} & \mathbf{A} \end{bmatrix} \end{bmatrix}
\end{equation}
considering all the structures of \eqref{fi}, \eqref{til} with $\mathbf{A}$, $\Sig$ and $\dtl{\Sig}$. Specifically, by letting $P = \left[ \begin{smallmatrix} P_1 & Q \\ * & R \end{smallmatrix} \right]$ in \eqref{KL}, in which $P_1 \in \mS^{n}$, $Q \in \mS^{n \times \ro}$ and $R \in \mS^{\ro \times \ro}$,  \eqref{iku} can be derived similarly as the equations (35)--(37) in \cite{Feng201660}. Thus \eqref{eqpro} is equal to \eqref{meitrix}. Since \eqref{eqpro} is equivalent to \eqref{seff}, this finishes the proof.
\end{proof}

\section{Conclusion}
Dissipative conditions for a linear CDDS have been derived based on a dynamical constraints approach. The result can be considered as a generalization of the Finsler Lemma approach in \cite{de2001stability} towards delay related system. Moreover, we have shown that the dissipative conditions with slack variables are equivalent to the conditions derived by direly substituting $\dot{\bm{x}}(t)$ during the construction of the  Krasovskii functional.

\bibliographystyle{authordate1}

\end{document}